\newtheorem{theorem}{Theorem}
\newtheorem{algorithm}{Algorithm}
\newtheorem{definition}{Definition}
\newtheorem{problem}{Problem}
\newtheorem{example}{Example}
\newenvironment{proof}[1][Proof]{\begin{trivlist}
\item[\hskip \labelsep {\bfseries #1}]}{\end{trivlist}}
\newcommand{\qed}{\nobreak \ifvmode \relax \else
      \ifdim\lastskip<1.5em \hskip-\lastskip
      \hskip1.5em plus0em minus0.5em \fi \nobreak
      \vrule height0.75em width0.5em depth0.25em\fi}
\begin{document}
%
\title{Algorithms and Throughput Analysis for MDS-Coded Switches}

\author{\IEEEauthorblockN{Rami Cohen and Yuval Cassuto}
\IEEEauthorblockA{Department of Electrical Engineering\\
Technion - Israel Institute of Technology\\
Technion City, Haifa 3200003, Israel\\
Email: rc@tx.technion.ac.il, ycassuto@ee.technion.ac.il}}


\maketitle
\maketitle

\begin{abstract} Network switches and routers need to serve packet writes and reads at rates that challenge the most advanced memory technologies. As a result, scaling the switching rates is commonly done by parallelizing the packet I/Os using multiple memory units. For improved read rates, packets can be coded with an [n,k] MDS code, thus giving more flexibility at read time to achieve higher utilization of the memory units. In the paper, we study the usage of [n,k] MDS codes in a switching environment. In particular, we study the algorithmic problem of maximizing the instantaneous read rate given a set of packet requests and the current layout of the coded packets in memory. The most interesting results from practical standpoint show how the complexity of reaching optimal read rate depends strongly on the writing policy of the coded packets.
\end{abstract}


%
\IEEEpeerreviewmaketitle

\section{Introduction}

Ever increasing demand for network bandwidth pressures switch and router vendors to scale their products at a fast pace. The most crucial component for throughput scaling is the memory sub-system that comprises the switching fabric. As it becomes extremely difficult or prohibitively costly to scale the read and write rates of memory units (MUs), an alternative solution of choice is to deploy multiple such units in parallel. Memory contention in switches occurs when multiple packets requested for read at a given instant happen to reside in the same memory unit. Because every memory unit can deliver a single chunk of data per time instant, this contention will result in a loss of switching throughput (while at the same time a different memory unit will be idle). Our ability to avoid such contention by clever packet placement is limited by the fact that the reading schedule of packets is not known upon arrival of the packets to the switch.

This issue has driven switch vendors to seek methods to reduce memory contention and thus increase the switching throughput. One particularly promising technique is to introduce {\em redundancy} to the packet-write path, such that upon read the switch controller will enjoy greater flexibility to read the requested packets from memory units not contended by other read requests. This redundancy is introduced in the form of {\em coding}, whereby additional coded chunks are calculated from the incoming packet and written along with it in the switch memory. The simplest scheme of coding applied to packets is {\em replication}, where the additional chunks written with the packet are simply copies of the same packet. The advantage of this scheme is in its simplicity, requiring only trivial encoding and decoding. A more advanced packet-coding scheme uses $[n,k]$ maximum distance separable (MDS) codes \cite{Huffman}. This coding scheme takes an input of $k$ packet chunks and encodes them into a codeword of $n$ chunks ($k \le n$), where \textit{any} $k$ chunks taken from the $n$ code chunks  can be used for the recovery of the original $k$ packet chunks. This maximal flexibility in retrieving the packet makes MDS codes very well suited to use in switch memories. Latency comparison between replication and MDS codes was pioneered by Huang et al. \cite{Huang}. It was shown that for $k=2$, the average latency for serving a packet decreases significantly when a certain scheduling model is used. This analysis was later extended by Shah et al. in \cite{Shah, Shah2}, where bounds on latency performance under multiple scheduling policies were investigated.

In this paper, we provide a model of a coded switch that considers the number of memory units in use and the code parameters. Then we put our focus on maximizing and analyzing the {\em throughput} of the switch. The notion of throughput we pursue here is the active MUs serving packets out of the MUs in the switch. Increasing the number of packets in the system (i.e., the load) gives more choice to the reader, and is thus expected to improve the throughput. For the probabilistic analysis of throughput we use a static distribution on the requested-packets' locations (governed by the write policy), and assume that the switch observes this static distribution in steady state. The problem of achieving maximal throughput is formulated in two equivalent ways as problems in set theory and in graph theory. These formulations allow us later to obtain insights, algorithms, and bounds for the problem. 

This paper is structured as follows. In Section \ref{sec:problem_formulation}, we provide complexity analysis of the throughput maximization problem, where in Section \ref{sec:prob_analysis} we discuss suboptimal algorithms and bounds on the optimal solution. We then provide a structured version of the problem in Section \ref{sec:special_cases} and show that it admits a polynomial-time solution. Finally, conclusions are given in Section \ref{sec:conclusions}.

\section{Problem Formulation and Complexity}
\label{sec:problem_formulation}

Consider a scenario in which multiple packets are stored in MUs, to be later forwarded by a switch. Let us assume that each packet consists of $k$ chunks, which are MDS-encoded into $n$ chunks. The $n$ chunks are then stored in $n$ MUs out of $N$ available ones ($k \le n \le N$), where overlapping is allowed (i.e., two or more packets may share one or more MUs). Out of many packets currently stored in the switch memory, a request arrives for $L$ packets, with the objective to read as many out of these packets in a single time instant. Recall that $k$ chunks out of the $n$ encoded ones are sufficient for recovering a packet. We know in which MUs these $L$ packets are stored, and wish to find methods for reading as many packets as possible simultaneously, with the constraint that each MU can be accessed \textit{only once} in the reading process. Let us denote by $L^{*}$ the maximal number of packets that can be read. We consider the following notion of throughput as a performance measure. 
\begin{definition}(Throughput)

The throughput $\rho$ of the system is defined as:
\begin{equation}
\rho = \frac{{L^{*}k}}{N}.
\end{equation}
\end{definition}
We name the problem of maximizing the throughput $\rho$ as the \textit{$[n,k]$-maximal throughput problem}, or nkMTP. An instance of the problem is illustrated in Figure \ref{fig:nkMTP_illus}, where data chunks of multiple packets are shown on top of MUs storing them. 
\begin{figure}[t]
\centering
\includegraphics[scale=0.58]{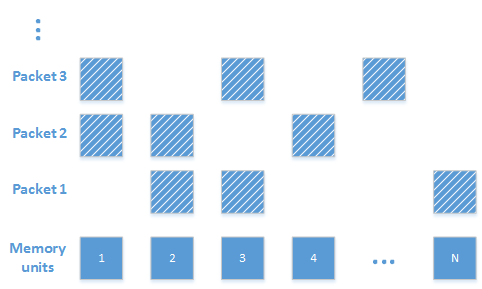}
\caption{Illustration of nkMTP. The patterned rectangles represent encoded data chunks ($n=3$).}
\label{fig:nkMTP_illus}
\end{figure}

The motivation for using MDS codes in this setting rather than simple replication is demonstrated in Figure \ref{fig:rep_mds_comparison}. Here we compare the use of a [$4$,$2$] MDS code for encoding $2$ chunks of a packet to $2$ uses of a [$2$,$1$] repetition code (i.e., 2-way replication), one use for each chunk. To read a packet when the repetition code is used, the reader needs to read one chunk from a specific subset of $2$ MUs out of the $4$ MUs storing the packet, and an additional chunk from the other subset of $2$ MUs. On the other hand, when a [$4$,$2$] MDS code is used, the reader can recover the packet by reading two chunks from \textit{any} $2$ MUs out of the $4$ MUs storing the packet chunks. The resulting curves in Figure \ref{fig:rep_mds_comparison} show that on average the [$4$,$2$] MDS code allows reading more simultaneous packets than the [$2$,$1$] repetition code. Note that both schemes write $4$ chunks per packet, and hence this advantage comes without increasing the write load.

\begin{figure}[t]
\centering
\includegraphics[scale=0.65]{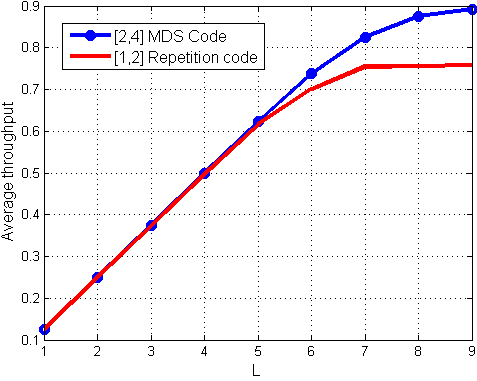}
\caption{Comparison between a [$4,2$] MDS code and two uses of the [$2,1$] repetition code ($N=16$).}
\label{fig:rep_mds_comparison}
\end{figure}

Evidently, the way packets are stored dictates how they should be read for optimal results. In particular, each write/read scheme combination may result in different number of packets that can be read. In this work, we will consider two write schemes that are easy to implement, and their optimally matched read schemes. For analysis purposes, we provide in the rest of this section two equivalent formulations of nkMTP. Consider the $N$ available MUs as the elements of the set $S = \left\{ {1,2,...,N} \right\}$. Each packet $i=1,2,...,L$ is stored in MUs indexed by a subset $S_i$ of $S$, where $\left| {{S_i}} \right| = n$ and the subsets may overlap. The set theory formulation of nkMTP is as follows.
\begin{problem}{(nkMTP, set theory formulation)}
\label{nkMTP_set}

\textbf{Input}: Set $S = \left\{ {1,2,...,N} \right\}$ and $L$ subsets of $S$, $S_i \subseteq S$, such that $\left| {{S_i}} \right| = n$.

\textbf{Output}: Subsets $S'_i \subseteq S_i$, $\left| S'_i \right| = k$, $S'_i \cap S'_j = \emptyset$ ($i \ne j$), such that the number of subsets is maximal.
\end{problem}

\begin{example}
\label{ex1}

$N=5, L=3, n=3$. The packets are stored in the MUs indexed by the sets ${S_1} = \left\{ {1,2,3} \right\}, {S_2} = \left\{ {2,4,5} \right\}, {S_3} = \left\{ {3,4,5} \right\}$. If $k=n=3$, we have that $L^{*}=1$ and the recovered packet can be either $1,2$ or $3$ since ${S_i} \cap {S_j} \ne \emptyset$ for all $i,j=1,2,3$. If $k=2$, a possible solution is ${S'_1 = \left\{ {1,2} \right\},S'_2 = \left\{ {4,5} \right\}}$ with $L^*=2$. Note that no more than $2$ packets can be read in this case, since ${L^*} \le \left\lfloor {{N}/{k}} \right\rfloor  = 2$. Finally, if $k=1$ all the packets can be read, and one possible solution is ${S'_1 = \left\{ 1 \right\},S'_2 = \left\{ 2 \right\},S'_3 = \left\{ 3 \right\}}$. 
\end{example} 

In addition to the set theory formulation, nkMTP can be formulated equivalently on a graph. Consider a \textit{bipartite} graph $G = \left( {{V_G},{E_G}} \right)$, where $V_G$ denotes the vertices of $G$ and $E_G$ denotes the edges of $G$. In addition, let us denote by $\deg(v)$ the \textit{degree} of a vertex $v \in V_G$. Since $G$ is bipartite, $V_G$ can be partitioned into two \textit{disjoint} sets of vertices, let us denote them by $X_G$ and $Y_G$. Thinking of $X_G$ as packets, and of $Y_G$ as MUs, vertex $i$ in $X_G$ will be connected to vertex $j$ in $Y_G$ if one of the encoded chunks of packet $i$ is stored in MU $j$. The resulting graph has the following properties: $\deg(x)=n$ ($\forall x \in X_G$), $\left| X_G \right| = L$, $\left| Y_G \right| = N$, and $\left| E_G \right| = nL$. nkMTP can be now formulated as follows.
\begin{problem}{(nkMTP, graph theory formulation)}
\label{nkMTP_graph}

\textbf{Input}: Graph representation $G$ of nkMTP.

\textbf{Output}: Subsets $X_H \subseteq X_G, Y_H \subseteq Y_G, E_H \subseteq E_G$ with the properties $\deg (x) = k,\deg (y) \in \left\{ {0,1} \right\}$ ($\forall x \in X_H,y \in Y_H$), such that ${\left| {X_H} \right|}$ is maximal.
\end{problem}

For instance, Example $\ref{ex1}$ is represented as the graph in Figure \ref{fig_ex1}. The output of Problem \ref{nkMTP_graph} is essentially a \textit{subgraph} $H$ of $G$, providing a solution to nkMTP with throughput $\rho  = {{\left| {{X_H}} \right|k}}/{N}$. 

\begin{figure}
\centering
\includegraphics[scale=0.74]{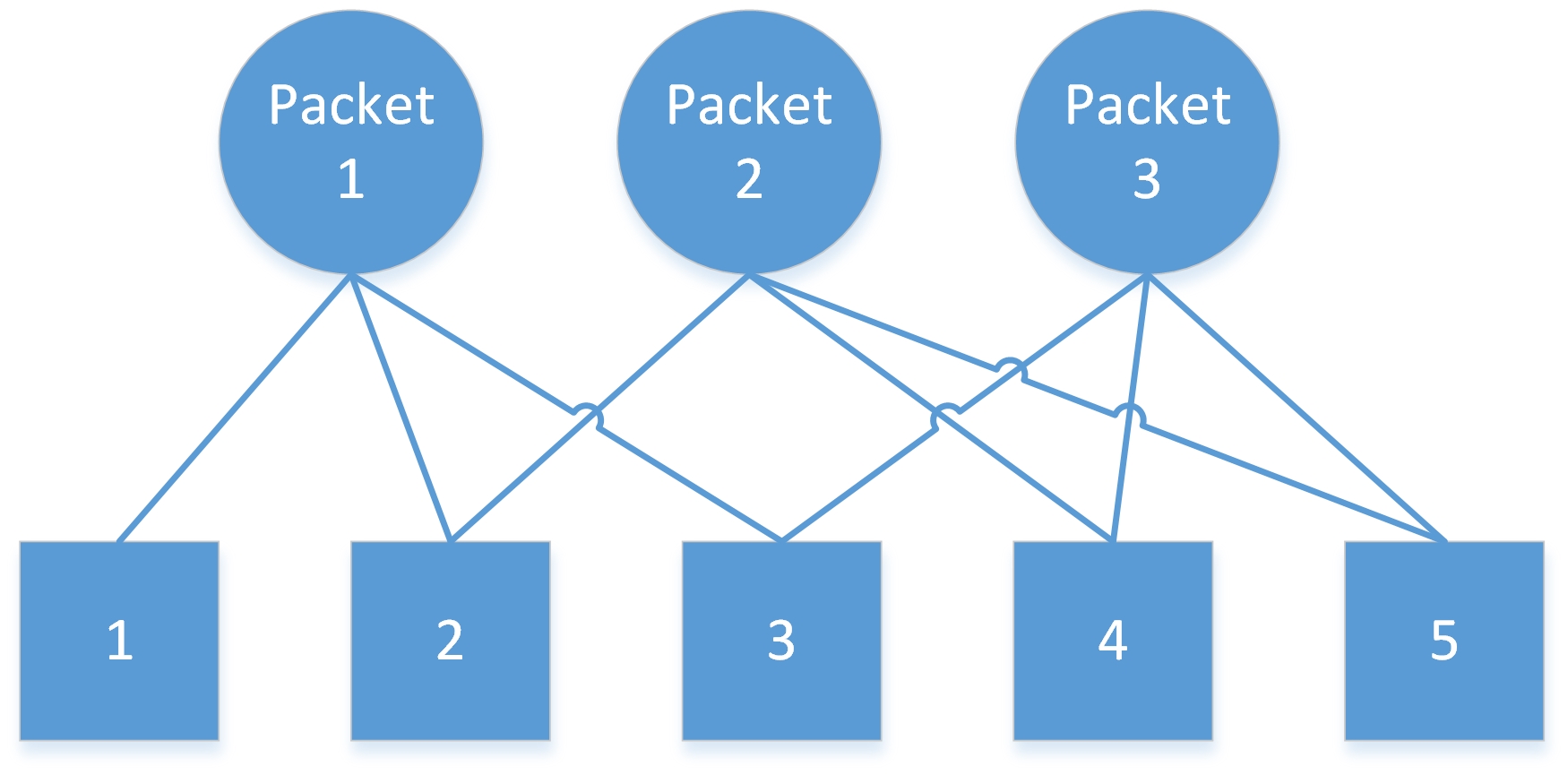}
\caption{nkMTP from Example \ref{ex1} formulated on a graph.}
\label{fig_ex1}
\end{figure}

%
%

\subsection{Complexity}
\label{sec:complexity}
In this subsection, we analyze the computational complexity of nkMTP. Clearly, a simple approach for solving nkMTP is to consider all possible assignments of MUs to packets, and to choose the assignment leading to the maximal number of packets that can be read. However, this approach is clearly inefficient since its complexity scales exponentially in $L$. In fact, nkMTP can be solved in polynomial time if $k=1, n \ge 1$ (i.e., each packet consists of one chunk and the repetition code is used) or $k=n=2$. On the other hand, nkMTP is NP-hard for $3 \le k \le n$.

\begin{theorem}
For $k=1, n \ge 1$ or $k=n=2$, nkMTP is solvable in polynomial time.
\label{th:k1poly}
\end{theorem}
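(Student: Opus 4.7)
The plan is to treat the two parameter regimes separately and in each case reduce nkMTP to a well-known polynomial-time solvable matching problem, using the graph theory formulation (Problem~\ref{nkMTP_graph}) as the starting point.

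\textbf{Case $k=1$, $n\ge 1$.} When $k=1$, each packet in $X_H$ must have degree exactly one and each MU in $Y_H$ degree at most one. Any subgraph satisfying these constraints is exactly a matching in the bipartite graph $G=(X_G\cup Y_G, E_G)$, and maximizing $|X_H|$ is the same as maximizing the size of the matching (since every matching edge saturates one vertex of $X_G$). Thus nkMTP in this regime is the classical maximum bipartite matching problem, which is solvable in polynomial time, e.g., by the Hopcroft--Karp algorithm or by a standard max-flow reduction. I would spell out the equivalence in one direction by taking an optimal solution of nkMTP and reading off the $|X_H|$ edges of the form $(x, y)$ with $x\in X_H$, $y\in Y_H$, and in the other direction by showing that any bipartite matching yields a feasible nkMTP solution of the same size.

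\textbf{Case $k=n=2$.} Here every packet $x\in X_G$ must be incident to both of its two MU-neighbors in any subgraph $H$, and the disjointness constraint $\deg(y)\in\{0,1\}$ requires that the two-element sets $S_i$ of the chosen packets be pairwise disjoint. I would construct an auxiliary graph $G'=(Y_G,E')$ on the MU-vertices, where each packet $i\in X_G$ contributes one edge $e_i=\{y,y'\}\in E'$ connecting its two incident MUs in $G$. A subset $X_H$ is feasible for nkMTP if and only if the corresponding edges $\{e_i : i\in X_H\}$ form a matching in $G'$. Hence nkMTP reduces to maximum matching in a general (not necessarily bipartite) graph, solvable in polynomial time by Edmonds' blossom algorithm.

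\textbf{Main obstacle.} The $k=1$ case is essentially immediate from the definitions, so the bulk of the argument is the $k=n=2$ case. The subtlety there is that the natural reduction leaves the bipartite world: since the two MUs storing a packet can be any pair in $S$, the resulting graph $G'$ may contain odd cycles, and one cannot invoke König's theorem or bipartite matching. Appealing to Edmonds' blossom algorithm is what keeps the running time polynomial. Once that reduction is in hand, verifying the equivalence between feasible nkMTP solutions and matchings in $G'$ is routine: disjointness of the $S_i$'s translates exactly into the no-shared-endpoint condition of a matching, and both objectives count the number of chosen packets/edges.
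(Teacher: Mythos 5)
Your proposal is correct and follows essentially the same route as the paper: maximum bipartite matching for $k=1$, and a reduction to maximum matching in a general (uni-partite) graph on the MU vertices for $k=n=2$. The only difference is that you spell out the details (Hopcroft--Karp, Edmonds' blossom algorithm, and why the auxiliary graph need not be bipartite) more explicitly than the paper does.
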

\begin{proof}
When $k=1, n \ge 1$, nkMTP is equivalent to finding a subgraph $H$ of $G$ (a graph representation of nkMTP) that is a \textit{maximum bipartite matching} \cite{Bondy}, i.e., containing the largest number of \textit{matched} pairs $(x,y)$, $x \in X_G, y \in Y_G$, such that each pair is connected by an edge and the edges are pair-wise non-adjacent. When $k=n=2$, consider the $N$ MUs as the vertices of a (uni-partite) graph, where an edge in this graph connects two MUs shared by the same packet. A \textit{maximum matching} in this graph will provide the largest number of disjoint pairs of MUs, each pair serving a packet, corresponding to an optimal solution of the nkMTP instance. Efficient algorithms are known for finding maximum matching in both cases \cite{Bondy}. \qed
\end{proof}


%
%
%
%
%

\begin{theorem}
nkMTP is NP-hard for $3 \le k \le n$.
\label{np_hard} 
\end{theorem}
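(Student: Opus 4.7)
The plan is to prove NP-hardness by polynomial reduction from the $k$-SET PACKING problem, which is well known to be NP-hard for every $k \geq 3$; it asks, given a collection $\mathcal{F}$ of $k$-element subsets of a universe $U$, for the largest sub-collection of pairwise disjoint sets. The boundary case $k = n$ requires essentially no work: since $|S_i| = n = k$ forces $S'_i = S_i$, the nkMTP instance on $\{S_i\}$ coincides literally with $k$-SET PACKING on the same collection, which immediately yields NP-hardness for $k = n \geq 3$.

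For the remaining range $3 \leq k < n$, I would reduce $k$-SET PACKING to nkMTP by a padding construction. Given $\{F_i\}_{i=1}^L$, attach to each $F_i$ a set $D_i$ of $n-k$ fresh ``dummy'' elements, set $\tilde{S}_i = F_i \cup D_i$, and adjoin a family of auxiliary ``blocker'' sets, each of size $n$, engineered so that in every optimal solution of the enlarged nkMTP instance every dummy in $\bigcup_i D_i$ is consumed by a blocker. Once this property holds, $S'_{\tilde{S}_i}$ is forced to be disjoint from $D_i$, and $|S'_{\tilde{S}_i}| = k = |F_i|$ implies $S'_{\tilde{S}_i} = F_i$. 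Disjointness of the chosen $k$-subsets in the enlarged instance then reduces exactly to disjointness of the original $F_i$'s, so the nkMTP optimum equals a known constant (the number of blockers selected) plus the $k$-SET PACKING optimum on $\{F_i\}_{i=1}^L$.

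The main technical obstacle is the design of the blocker gadget itself. A naive padding with unique dummies does \emph{not} suffice: the added freedom of swapping a dummy into $S'_{\tilde{S}_i}$ in place of an element of $F_i$ can create packings with no counterpart in the source instance; for example, two sets $F_i,F_j$ sharing a single element become disjoint after a dummy swap, which would inflate the apparent optimum. The blockers must therefore be calibrated so that (i) including all of them is strictly optimal for the count, and (ii) their chosen $S'$-subsets necessarily cover every $d \in D_i$, leaving no dummy available to $\tilde{S}_i$. The natural route for (ii) is to share each dummy $d_i^j$ between several blockers and arrange their unique-element structure so that, on the margin, any optimal solution that leaves $d_i^j$ unused could be locally improved by letting a blocker consume it. Once such a gadget is built and the ``blockers dominate all dummies'' property is verified, the reduction is polynomial in size and its correctness follows from the argument above.
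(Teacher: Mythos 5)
Your boundary case $k=n\ge 3$ matches the paper exactly, and you correctly diagnose why naive padding with fresh dummies fails. But the heart of your reduction for $3\le k<n$ --- the blocker gadget --- is never constructed; it is only specified by the properties you would like it to have, and those properties are themselves problematic. Property (ii) demands that in \emph{every} optimal solution each dummy is consumed by a blocker. The nkMTP objective counts only how many packets are read, not which MUs they use, so a selected blocker can always pick $k$ of its $n$ elements while avoiding any particular dummy $d_i^j$ without changing the objective value; your proposed ``local improvement'' argument therefore has nothing to improve, and optimal solutions leaving dummies free for the $\tilde S_i$'s will always exist. What a correct reduction actually needs is a value-preserving exchange argument showing that any solution of size $M+B$ can be \emph{rearranged} into one of the same size with the desired structure --- that exchange argument is the real content of the proof and is absent. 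As written, the backward direction of your reduction does not go through.

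For comparison, the paper avoids gadget design entirely. It first reduces $l$-SP to the case $n=k+1$ by duplicating the instance (isomorphic copies $B_i$ of the $A_i$ on fresh elements) and adjoining a \emph{single} shared new element $\theta$ to every set, then asking for $2M$ disjoint $k$-subsets. Since the subsets must be pairwise disjoint, $\theta$ appears in at most one of them; a pigeonhole argument over the two sides then extracts $M$ selected subsets that avoid $\theta$ and hence coincide with original sets, giving a packing of size $M$ in the source instance. Iterating this construction lifts hardness from $n=k+j$ to $n=k+j+1$ for all $j\ge 1$. You may find it instructive that the one shared element $\theta$ plays roughly the role you wanted your blockers to play, but its correctness rests on a counting argument about disjointness rather than on forcing coverage of elements, which is why it succeeds where the blocker approach stalls.
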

To prove Theorem \ref{np_hard}, we \textit{reduce} the $l$-set packing ($l$-SP) problem \cite{Hazan}, known to be NP-hard, to nkMTP. In $l$-SP, there are $L$ sets, each of size $l$, and the problem is to find the maximal number of pair-wise disjoint sets. The details of the reduction are provided in Appendix \ref{np_hard_detailed_proof}. The consequence of the hardness result of Theorem \ref{np_hard} is that no efficient algorithms are expected to be found for solving nkMTP when $3 \le k \le n$. However, in the next sections we provide algorithmic and analytic results that help solving the nkMTP problem in practical settings. We will also see variants of nkMTP for which we do find polynomial-time algorithms.

%

\section{Probabilistic Analysis and Bounds}
\label{sec:prob_analysis}
The fact that nkMTP turns out to be NP-hard for interesting coding parameters is important theoretically, but should not discourage one from seeking high-throughput coded switching. In this section we provide tools that will help find good coded-switching solutions in a practical setup. 

\subsection{Lower bounding the maximal solution}
In this sub-section, we provide a lower bound on the number of packets that can be read, for a given nkMTP instance. Consider the following randomized algorithm, applied to a graph formulation of nkMTP (Problem \ref{nkMTP_graph}):
\begin{algorithm}
\label{rand_algorithms}
\begin{enumerate}
\item Calculate the degree of each MU.
\item Assign each MU independently at random to one of its connected packets with probability ${1}/{d}$, where $d$ is the MU degree.
\item Return the set of packets having at least $k$ connected MUs.
\end{enumerate}
\end{algorithm}
The packets in the set returned by Algorithm \ref{rand_algorithms} can be read, since at least $k$ chunks are available for each packet. We now turn to calculate the expected size of the set returned by Algorithm \ref{rand_algorithms}, which will be shown to serve as a lower bound on $L^*$. Denote by $l_i$ an indicator random variable that equals $1$ if packet $i$ was assigned at least $k$ MUs, and equals $0$ otherwise. In addition, define auxiliary indicator random variables $l_{i,j}$, which equals $1$ if MU $j$ from the set of MUs connected to packet $i$ (i.e., $j \in S_i$, where $S_i$ are defined in Section \ref{sec:problem_formulation}) ends up connected to packet $i$. Thus:
\begin{equation}
\label{eq:larger_k}
\Pr \left( {{l_i} = 1} \right) = \Pr \left( {\sum\limits_{j \in S_i} {{l_{i,j}}}  \ge k} \right).
\end{equation}

Denote by $d_{i,j}$ the degree of MU $j$ connected to packet $i$. $l_{i,j}$ are Bernoulli random variables whose success probabilities are $1/d_{i,j}$. These random variables are independent for the same $i$ and different $j$, but not identically distributed (since the MU degrees may vary). The distribution of these random variables is called \textit{Poisson binomial distribution} \cite{Fernandez}. A closed-form expression for the right-hand side of Equation \eqref{eq:larger_k}, denoted $Q(i)$, is obtained using \cite{Fernandez}:
\begin{align}
\label{Qk}
Q\left( i \right) &= 1 - \sum\limits_{s = 0}^n {\left\{ {\left[ {\sum\limits_{t = 0}^{k - 1} {{e^{ - j2\pi st/\left( {n + 1} \right)}}} } \right]} \right.} \\ \nonumber
&\cdot \left. {\prod\limits_{j \in S_i} {\left[ {\left( {1/{d_{i,j}}} \right) \cdot {e^{j2\pi s/\left( {n + 1} \right)}} + \left( {1 - 1/{d_{i,j}}} \right)} \right]} } \right\}/\left( {n + 1} \right).
\end{align}
We are now ready to provide a lower bound on $L^*$.
\begin{theorem}
\label{theorem:prob}
For a given nkMTP instance, the number of packets that can be read is lower-bounded as follows:
\begin{equation}
\label{L_Q}
{L^*} \ge \sum\limits_{i = 1}^L {Q\left( i \right)}.
\end{equation}
\end{theorem}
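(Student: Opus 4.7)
The plan is to exhibit Algorithm~\ref{rand_algorithms} as a randomized procedure that always outputs a feasible reading schedule, and then to lower-bound $L^*$ by the expected size of this output via a standard probabilistic-method argument.

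First, I would verify feasibility of the algorithm's output. Because Step~2 assigns each MU independently to exactly one of the packets it is connected to, no MU is ever allocated to two distinct packets. Hence every packet $i$ that ends up with at least $k$ assigned MUs can be recovered using $k$ of them, and all such simultaneous recoveries are mutually conflict-free by construction. Letting $R$ denote the cardinality of the set returned in Step~3, this shows that $R$ is always attainable as a number of simultaneously readable packets in the given nkMTP instance; since $L^*$ is, by definition, the \emph{maximum} such number, we obtain the deterministic pointwise bound $R \le L^*$.

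Second, I would compute $E[R]$ by linearity of expectation. Writing $R = \sum_{i=1}^L l_i$, we obtain $E[R] = \sum_{i=1}^L \Pr(l_i = 1) = \sum_{i=1}^L Q(i)$, where the second equality is just the definition of $Q(i)$ in~\eqref{eq:larger_k}, and the closed form~\eqref{Qk} follows from the cited Poisson-binomial identity applied to the independent (but non-identically distributed) Bernoulli variables $l_{i,j}$, $j \in S_i$, with success probabilities $1/d_{i,j}$. Independence across $j$ for fixed $i$ is immediate from the per-MU independent randomization in Step~2. Taking expectations in $R \le L^*$ then yields $\sum_{i=1}^L Q(i) = E[R] \le L^*$, which is the claim.

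There is essentially no obstacle in this argument. The only point that needs a moment of care is that the algorithm must produce a feasible solution for \emph{every} realization of its random choices, not merely in expectation; it is this pointwise feasibility that allows the deterministic bound $R \le L^*$ to be integrated into the expectation bound on $L^*$. The closed-form expression for $Q(i)$ is taken off-the-shelf from the Poisson-binomial literature and requires no separate derivation here.
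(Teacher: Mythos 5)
Your proposal is correct and follows essentially the same route as the paper: run Algorithm~\ref{rand_algorithms}, observe that its output is always a feasible set of simultaneously readable packets, and compute its expected size $\sum_{i=1}^L Q(i)$ by linearity of expectation. The only cosmetic difference is that you conclude via the pointwise bound $R \le L^*$ integrated over the randomness, while the paper invokes the existence of a realization of size at least the mean; these are two phrasings of the same probabilistic-method step.
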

\begin{proof}
Denote by $I$ the cardinality of the set returned by Algorithm \ref{rand_algorithms}. $I$ is a random variable and its expected value (over realizations of sets provided by Algorithm \ref{rand_algorithms}) is:
\begin{align}
\label{prob_method_lb}
E\left[ I \right] &= E\left[ {\sum\limits_{i = 1}^L {{l_i}} } \right] = \sum\limits_{i = 1}^L {E\left[ {{l_i}} \right]}  = \sum\limits_{i = 1}^L {\Pr \left( {{l_i} = 1} \right)}
 \\ \nonumber &= \sum\limits_{i = 1}^L {Q\left( i \right)}.
\end{align}
Since $E\left[ I \right]$ is an expected value, there must exist a valid solution to nkMTP with cardinality at least $E\left[ I \right]$. The existence of such set leads to the lower bound \eqref{prob_method_lb}. \qed
\end{proof}

\begin{example}
\label{ex_2}

Consider the nkMTP instance of Figure \ref{fig_ex1}. The MU degrees are ${d_{1,1}} = 1, {d_{1,2}} = 2, {d_{1,3}} = 2$ and so on. The lower bounds on $L^{*}$, obtained using \eqref{L_Q}, are $2.75$, $1.75$ and $0.5$ for $k=1$, $k=2$ and $k=3$, respectively, where the corresponding $L^*$ values are $3$, $2$ and $1$.
\end{example}


\subsection{Expected performance of nkMTP ensembles}
In this sub-section, we analyze the nkMTP in a random setting, where we consider random \textit{ensembles} of nkMTP instances with fixed parameters $k, n, N$ and $L$. Assuming a graph formulation (Problem \ref{nkMTP_graph}), an instance taken from an ensemble consists of a graph $G$ with $L$ packets and $N$ MUs, where the $n$ chunks of each packet are stored independently and uniformly at random at $n$ MUs. For each ensemble, we would like to estimate the probability of maximum throughput, i.e. the existence of a solution in form of a subgraph $H$ such that $X_H=X_G$. The method is to identify a condition for the existence of such a solution, derived from the following extension of Hall's theorem.
\begin{theorem}
\label{ex_hall}
\emph{(Extended Hall's Theorem~\cite{Viderman})}
Consider a bipartite graph $G = \left( {{X_G},{Y_G},{E_G}} \right)$. Then, $G$ satisfies $\deg(x)=k$ and $\deg (y) \in \left\{ {0,1} \right\}$ for all $x \in X_G$ and all $y \in Y_G$, if and only if for every subset $W$ of $X_G$, 
\begin{equation}
\label{PMCT}
\left| {T\left( W \right)} \right| \ge k\left| W \right|,
\end{equation}
where ${T}\left( W \right)$ is the set of vertices (in $Y_G$) adjacent to the vertices in $W$.
\end{theorem}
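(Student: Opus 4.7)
The plan is to reduce the statement to the classical Hall marriage theorem via a vertex-splitting construction. The $(\Rightarrow)$ direction is immediate: if a subgraph $H$ with $\deg_H(x)=k$ and $\deg_H(y)\in\{0,1\}$ exists, then for any $W\subseteq X_G$ the $k|W|$ edges leaving $W$ in $H$ land on $k|W|$ distinct vertices of $Y_G$, all of which lie in $T(W)$. So the real content is the $(\Leftarrow)$ direction.

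For the hard direction I would build an auxiliary bipartite graph $G'=(X',Y_G,E')$ obtained from $G$ by replacing each $x\in X_G$ with $k$ twin copies $x^{(1)},\ldots,x^{(k)}$, each retaining the same neighborhood $N_G(x)\subseteq Y_G$. The key observation is that a subgraph $H$ of $G$ with the degree properties required by the statement corresponds bijectively to a matching in $G'$ that saturates $X'$: given $H$, distribute the $k$ edges incident to each packet $x$ among its $k$ clones; conversely, from a saturating matching in $G'$, collapse the clones to recover $H$. Since $\deg_{G'}(y)\in\{0,1\}$ in a matching and each original packet ends up with exactly $k$ edges, both constraints are automatically enforced.

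The central step is then to verify that the extended Hall condition $|T(W)|\ge k|W|$ for every $W\subseteq X_G$ is logically equivalent to the ordinary Hall condition $|T_{G'}(W')|\ge|W'|$ for every $W'\subseteq X'$. One direction is obtained by inflating $W$ to the set $W'$ of all $k|W|$ copies of its elements, noting that $T_{G'}(W')=T(W)$. The converse follows by projecting an arbitrary $W'\subseteq X'$ down to the set $W\subseteq X_G$ of packets with at least one copy represented in $W'$, and using $|W'|\le k|W|$ together with $T_{G'}(W')=T(W)$. Once this equivalence is established, a direct invocation of classical Hall's theorem on $G'$ yields a saturating matching, and the bijection above converts it to the desired subgraph $H$ of $G$.

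The main obstacle is mostly bookkeeping: verifying that the twin-copy construction gives a genuine bijection between subgraphs of the required shape and saturating matchings of $G'$, and that the neighborhoods transfer exactly (so $T_{G'}(W')=T(W)$ whenever $W'$ is a union of clone sets, and the inclusion $T_{G'}(W')\subseteq T(W)$ holds in general). No clever estimate is needed—the reduction is elementary once the correspondence is set up, which is why Viderman's extension is essentially a direct corollary of Hall's original theorem.
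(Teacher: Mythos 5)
The paper gives no proof of this theorem at all: it is imported verbatim with a citation to Viderman, so there is nothing internal to compare against. Your argument is a correct, self-contained derivation via the standard device of cloning each $x\in X_G$ into $k$ twins and invoking the classical Hall marriage theorem on the inflated graph; the equivalence of the two Hall conditions is verified exactly as it should be (inflate $W$ to all $k|W|$ clones for one direction, project $W'$ and use $|W'|\le k|W|$ together with $T_{G'}(W')=T(W)$ for the other). Two small points of hygiene. First, the theorem as literally printed says that $G$ itself satisfies $\deg(x)=k$, which cannot hold for the graphs in question (where $\deg(x)=n$); you have silently and correctly read it as the existence of a subgraph $H$ with $\deg_H(x)=k$ and $\deg_H(y)\in\{0,1\}$, which is the intended meaning given how the paper uses the result. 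Second, the correspondence between such subgraphs $H$ and $X'$-saturating matchings of $G'$ is many-to-one rather than a bijection (each $H$ arises from $(k!)^{|X_G|}$ ways of distributing a packet's $k$ edges among its clones), but only surjectivity onto valid subgraphs is needed, and that part of your construction is sound.
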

That is, the extended Hall's theorem provides a necessary and sufficient condition to determine whether MUs in $Y_G$ can be assigned to packets in $X_G$, such that the degree of each vertex in $X_G$ will be $k$. This is equivalent to the existence of a maximum throughput solution in $G$ (i.e., each packet can be served). In the sequel we say that \textit{Hall's condition} holds for a given subset $W$ of $X_G$ if the condition \eqref{PMCT} of Theorem \ref{ex_hall} holds. 


Denote by ${u_m} = {u_m}\left( {|W|,n;N} \right)$ the probability that the union of $|W|$ sets, each containing $n$ elements taken independently and uniformly at random from the set $\left\{ {1,2,...,N} \right\}$, results in a set of size $m$ ($m=1,2,...,N$). An expression for $u_m$ was derived in a previous work by us \cite{CC}, based on a Markov model that is an extension of the balls-and-bins model \cite{Mitz}. For a given subset $W$ of $X_G$, denote by ${P_W}$ the probability that the number of neighbours of $W$ is greater than or equal to $k\left| W \right|$ (meaning that Hall's condition holds for $W$). Since each vertex in $W$ is of degree $n$, $P_W$ is equivalent to the probability that the random union of $|W|$ sets of size $n$ (the number of MUs that store each packet) results in a set of size that is greater than or equal to $k|W|$. Thus,
\begin{equation}
\label{P_W}
P_W = \sum\limits_{m=k\left| W \right|}^N {{u_m}} \left( {\left| W \right|;n,N} \right).
\end{equation}
Note that $P_W$ from \eqref{P_W} is independent of the particular choice of $W$. Instead, it depends on the \textit{size} of $W$, i.e., ${P_W} = {P_{\left| W \right|}}$.

Calculating the probability that all subsets $W \subseteq X_G$ satisfy Hall's condition is a difficult task, due to dependencies between the subsets. However, a \textit{necessary} condition for an instance to contain a maximum throughput solution is that the set $X_G$ satisfies Hall's condition, which happens with probability $P_{|X_G|}$. Therefore, an upper bound on the probability that an instance contains a maximum throughput solution is simply $P_{|X_G|}$. As $P_{|X_G|}$ is obtained by raising a Markov matrix to power $|X_G|$, its calculation is efficient. In Figure \ref{fig:pw}, we first calculate the probability that a random graph $G$ contains a maximum throughput solution, by averaging over the optimal solution sizes of $10,000$ instances for $k=2$ and varying values of $n$. We then compare this empirical average to the upper bound obtained using $P_{|X_G|}$. The results show that the upper bound provided by $P_{|X_G|}$ is tight and captures the behaviour of the empirical average. The bound is tighter as $n$ becomes larger compared to $k$, meaning that in such cases the probability that Hall's condition is satisfied for all $X_G$ subsets is dominated by the probability that the set $X_G$ satisfies Hall's condition. As expected, better average performance is achieved as $n$ increases (i.e., when redundancy increases). It is also demonstrated that ${P_{\left| {{X_G}} \right|}} = 0$ for $\left| {{X_G}} \right| > \left\lfloor {N/k} \right\rfloor$, since the maximal number of packets that can be served is upper bounded by $\left\lfloor {N/k} \right\rfloor$.

Hall's condition provides a convenient way for estimating expected performance of an ensemble, when concentrating on maximum throughput solution. This way, instead of estimating the probability of maximum throughput empirically by averaging over the optimal solution sizes of many nkMTP instances (whose solution is hard in general, see Section \ref{sec:complexity}), we may resort to calculate $P_{|X_G|}$ instead. This is especially useful for large values of $n$ and $L$, for which a direct solution of nkMTP (e.g., by considering all possible assignments of MUs to packets), may be prohibitively complex. In addition, we obtained an efficient way for choosing parameters $k$ and $n$ and load $L$ such that good performance is expected.

\begin{figure}
\centering
\includegraphics[scale=0.62]{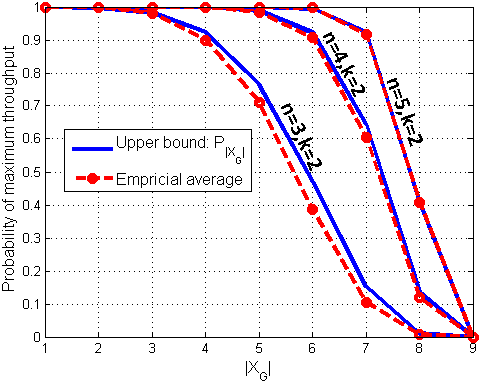}
\caption{The probability that all $|X_G|$ packets of a random nkMTP instance can be served, as a function of $k$ and $n$ ($N=16$).}
\label{fig:pw}
\end{figure}

\section{Polynomial-Time Solution for a Structured Variant of nkMTP}
\label{sec:special_cases}
Motivated by the hardness of the nkMTP problem proved above, we now turn to consider variants of the problem that can be solved efficiently. Our approach to make the problem easier to solve is by changing the packet writing policy upon packet arrival. We show in this section that, by a moderate restriction on the MUs chosen to store the $n$ chunks of a packet, we turn the maximal-throughput read problem to a tractable one. 

To the nkMTP problem discussed above we add the constraint that the $n$ encoded chunks of each packet are stored in $n$ \textit{consecutive} MUs. We will refer to this scheme as CnkMTP, for which we show that polynomial-time solution exists. We first sort the sets $S_i$ in a non-decreasing order of the maximal index of the MUs they contain, and w.l.o.g. we consider instances of CnkMTP in which the packets are sorted accordingly. We begin with an empty set of packets, denoted $\Lambda$, and an empty set of MUs, denoted $\Omega$. The following algorithm solves CnkMTP for $k \le n$.

\begin{algorithm}
\label{CnkMTP_algorithm}
\begin{enumerate}
\item Initialize $S'_i=S_i$ for $i=1,\ldots,L$.
\item Set $i:=1$.
\item If $|S'_i|\geq k$, add $i$ to $\Lambda$, and add the $k$ lowest elements of $S'_i$ to $\Omega$.
\item Remove all the elements added to $\Omega$ from all the sets $S'_j$, for $j>i$.
\item Set $i:=i+1$. If $i > L$, stop. Otherwise, go to step 3. 

\end{enumerate}
\end{algorithm}

\begin{theorem}
The set of packets $\Lambda$ and their corresponding MUs in $\Omega$ found by Algorithm \ref{CnkMTP_algorithm} are an optimal solution to CnkMTP.
\end{theorem}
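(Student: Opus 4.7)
My plan is to prove optimality of Algorithm~\ref{CnkMTP_algorithm} by induction on $L$, using a slightly generalized statement that allows the $S_i$ to be consecutive intervals of possibly varying lengths (still sorted by maximum index). This generalization is necessary because removing the MUs committed by the algorithm to its first selection produces a sub-instance whose intervals remain consecutive but may be shorter.

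The base case $L=0$ is trivial. For the inductive step, I first dispose of the easy case $|S_1|<k$: no feasible solution can serve packet $1$, so both the algorithm and any optimum skip it, and the problem reduces to $L-1$ packets. When $|S_1|\geq k$, the algorithm selects packet $1$ with $A(1)=\{a_1,\ldots,a_1+k-1\}$, the $k$ lowest MUs of $S_1$. The heart of the proof is the following \emph{key claim}: there exists an optimal solution $\Lambda^*$ with $1\in\Lambda^*$ and $A^*(1)=\{a_1,\ldots,a_1+k-1\}$. Granting the key claim, removing packet $1$ together with these $k$ MUs yields a sub-instance whose residual sets $S_j\setminus\{a_1,\ldots,a_1+k-1\}$ are still consecutive intervals: because $a_j\geq a_1$ in the sorted order, each residual is either $S_j$ itself (if $a_j>a_1+k-1$) or $[a_1+k,b_j]$ (if $a_j\leq a_1+k-1$). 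The inductive hypothesis applied to this sub-instance completes the argument.

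To prove the key claim when $1\in\Lambda^*$, I would normalize $A^*(1)$ by a swap argument. For each $m\in\{a_1,\ldots,a_1+k-1\}\setminus A^*(1)$ that is used by some $A^*(j)$ with $j>1$, pick any $m'\in A^*(1)\setminus\{a_1,\ldots,a_1+k-1\}$ (which exists whenever $A^*(1)\neq\{a_1,\ldots,a_1+k-1\}$) and exchange $m$ and $m'$ between $A^*(1)$ and $A^*(j)$. The swap is valid because $m'\in S_j$: indeed $m'>a_1+k-1\geq m\geq a_j$ and $m'\leq b_1\leq b_j$, so $m'\in[a_j,b_j]=S_j$. Each swap strictly increases $|A^*(1)\cap\{a_1,\ldots,a_1+k-1\}|$, so at most $k$ swaps yield $A^*(1)=\{a_1,\ldots,a_1+k-1\}$.

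The main obstacle is the case $1\notin\Lambda^*$, where I must produce a modified optimal solution containing packet $1$. The plan is to consider $\Lambda^*\cup\{1\}$ with $A(1)=\{a_1,\ldots,a_1+k-1\}$: if this is feasible, then $\Lambda^*$ was not optimal, a contradiction. Otherwise, Theorem~\ref{ex_hall} gives a minimal subset $W\ni 1$ with $|T(W)|<k|W|$, and since $\Lambda^*$ is feasible we have $|T(W\setminus\{1\})|\geq k(|W|-1)$, forcing $|T(W)|\in[k|W|-k,\,k|W|-1]$, a near-tight obstruction. The structural lever is that every $j\in W\setminus\{1\}$ satisfies $b_j\geq b_1$, so $S_j$ extends rightward beyond $S_1$ and supplies enough slack to re-route the MU assignments of $W\setminus\{1\}$ away from $\{a_1,\ldots,a_1+k-1\}$, at the cost of dropping at most one packet from $\Lambda^*$ in exchange for admitting packet $1$. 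Executing this re-routing—turning the near-tight Hall obstruction into a concrete one-for-one swap using the interval geometry—is the delicate technical step; once done, it yields a feasible solution of cardinality $\geq|\Lambda^*|$ that contains packet $1$ with the algorithm's chosen MUs, completing the key claim and hence the induction.
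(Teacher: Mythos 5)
There is a genuine gap, and it sits exactly where the content of the theorem lies: the case $1\notin\Lambda^*$ of your key claim. You describe a plan (find a minimal Hall-deficient set $W\ni 1$, observe $k|W|-k\le |T(W)|\le k|W|-1$, then ``re-route'') but explicitly defer its execution as ``the delicate technical step.'' As sketched it is not obviously executable: a bound on the deficiency of one minimal set $W$ does not by itself yield feasibility of $\bigl(\Lambda^*\setminus\{j\}\bigr)\cup\{1\}$ for some single $j$, because that requires controlling the deficiency of \emph{every} subset containing packet $1$, not just one; making this work would need a deficiency-form Hall/Rado theorem for $k$-fold matchings plus an argument exploiting the interval geometry, none of which you supply. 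The paper avoids this entirely with a direct exchange: first normalize any optimum so each served packet receives $k$ \emph{consecutive} MUs (a swap with later packets, in the same spirit as your normalization of $A^*(1)$); then, if the eligible packet $i$ is in no optimum, take the smallest $j>i$ that the optimum serves and slide its consecutive block of $k$ MUs left so that it begins at the first MU of $S'_i$, replacing $j$ by $i$ without disturbing any $j'>j$. That one slide-and-replace step is precisely what your argument is missing, and substituting it for the Hall detour would close your induction.

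A second, more subtle problem: your generalized induction hypothesis (consecutive intervals of \emph{varying} lengths, sorted only by maximal index) does not support the claim that residuals stay intervals. Your justification is ``because $a_j\ge a_1$ in the sorted order,'' but sorting by $b_j$ does not imply $a_j\ge a_1$ once lengths vary (e.g.\ $S_1=[3,5]$, $S_2=[1,6]$), and removing $\{a_1,\dots,a_1+k-1\}$ from such an $S_2$ disconnects it. In the actual problem all original intervals have length $n$, so sorting by maximal index coincides with sorting by minimal index, and that stronger property is what must be carried through the induction (i.e.\ the invariant should include non-decreasing left endpoints). Your case $1\in\Lambda^*$ swap argument is fine, but with the two issues above the proof as written does not go through.
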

\begin{proof}
The proof starts by observing that w.l.o.g the $k$ MUs assigned to a read packet are consecutive. If not, we can always exchange MUs between packet $j$ and packets $j'>j$ to make the assignment consecutive. 

Now we prove that if $|S'_i|\geq k$ in step 3, then $i$ appears in the optimal solution. We prove by induction on $i$. Assume all packets $1,\ldots,i-1$ can be chosen optimally according to Algorithm \ref{CnkMTP_algorithm}. Then we show that the $i$-th packet can be chosen in the same way. We assume by contradiction that $|S'_i|\geq k$ and there is no optimal solution that contains packet $i$. Then we look at the smallest $j>i$ for which packet $j$ appears in the optimal solution. Then from the fact that its $k$ assigned MUs are consecutive it is possible to shift the assignment to the first MU index in $S'_i$, and replace $j$ by $i$ in the optimal solution without affecting the selection of any $j'>j$. This is a contradiction.\qed
\end{proof}

The operations required in Algorithm \ref{CnkMTP_algorithm} are simple shifting, in addition to the sorting of the packets. Hence its running time is clearly polynomial. We also consider a special case of CnkMTP, in which the $N$ MUs are partitioned into ${N}/{n}$ blocks of size $n$ (assuming that $n$ is a divisor of $N$), where the encoded chunks of each packet are restricted to one of those blocks. We term this scheme as CnkMTP \textit{on blocks}. A comparison between nkMTP and CnkMTP is given in Figure \ref{fig:scheme_comp} for $k=3, n=4$ and $N=16$. This figure reveals a very interesting tradeoff between throughput and computational complexity. Adding structure to the write policy in CnkMTP results in some loss of throughput compared with the unrestricted nkMTP. However, this throughput can be attained efficiently, where for nkMTP it is intractable to reach the throughput efficiently for large problem instances. It is also shown that CnkMTP gives better throughput compared with an even more structured writing policy, CnkMTP on blocks.

\begin{figure}
\centering
\includegraphics[scale=0.62]{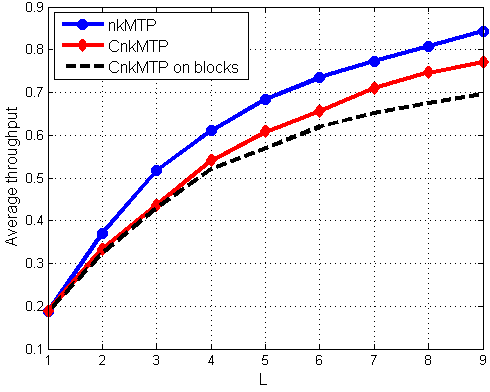}
\caption{Scheme comparison.}
\label{fig:scheme_comp}
\end{figure}

\section{Conclusion}
\label{sec:conclusions}

In this paper, we anaylzed the fundamental limits of using MDS codes in a switching environment. We proved that in its most general form, the problem of obtaining maximum throughput for a set of requested packets is a hard problem. Therefore, we provided bounds and algorithmic tools to aid its solution in practice. By a simple modification of the writing policy used by the switch, we have shown how the problem can be solved efficiently. Our work leaves many interesting problems for future research, most immediately how to tailor the switch and code parameters to match real-life network workloads. From a practical point of view, we currently investigate the performance of certain MDS codes in a switching environment, taking encoding/decoding overhead into account.

\section{Acknowledgement}

The authors would like to thank A. Nekrasov and S. Shulga for their contribution to performance evaluation simulations. This work was partly conducted under a joint ISF-UGC grant. In addition, it was supported by a Marie Curie CIG grant and by the Israel Ministry of Science and Technology.

\bibliographystyle{IEEEtran}

\bibliography{MDS_paper}

\appendices

\section{Detailed proof of Theorem \ref{np_hard}}
\label{np_hard_detailed_proof}
To show hardness of nkMTP when $3 \le k \le n$, we first define the decision-problem version of nkMTP, which we name $M$-nkMTP. In the rest of this appendix, we assume that $3 \le k \le n$.

\begin{problem}{($M$-nkMTP)}
\label{nkMTP_decision}

\textbf{Input}: Set theory formulation (Problem \ref{nkMTP_set}) of nkMTP and a positive integer $M$.

\textbf{Output}: "Yes" if there are $M$ subsets $S'_i \subseteq S_i$ with the properties $\left| S'_i \right| = k$, $S'_i \cap S'_j = \emptyset$ ($i \ne j$).
\end{problem}
For showing that nkMTP is NP-hard we can equivalently show that $M$-nkMTP is NP-complete. Note that $M$-nkMTP is in NP, since once we are given a collection of $M$ subsets $S'_i \subseteq S_i$ claimed to be pair-wise disjoint, this can validated in polynomial time. It remains to reduce a known NP-complete problem to $M$-nkMTP, meaning that we have to show that an efficient solution to $M$-nkMTP implies an efficient solution to this NP-complete problem. We will reduce the \textit{$l$-set packing} problem ($l$-SP), known to be NP-complete for $l \ge 3$  \cite{Hazan}, to our problem. $l$-SP is defined as follows.
\begin{problem}{($l$-SP)}
\label{l_SP}

\textbf{Input}: Collection of sets over a certain domain, each of them of size $l$, and a positive integer $M$.

\textbf{Output}: "Yes" if there are $M$ pair-wise disjoint sets.
\end{problem}
\begin{proof}
First, $M$-nkMTP is NP-complete for $3 \le k = n$, since in this case $M$-nkMTP and $l$-SP, for $l=k=n$, are essentially the same. Therefore, it remains to reduce $l$-SP ($l \ge 3$) to $M$-nkMTP for $3 \le k < n$. Let us begin with reducing $l$-SP to $M$-nkMTP with $k=l, n = k+1$.

Consider an instance of $l$-SP with $l=k$, with $M$ denoting the number of pair-wise disjoint subsets required in the solution. Assume that the input to $l$-SP are $L$ sets $A_i$ ($i=1,2,...,L$), where the elements contained in $A_i$ are $\bigcup\limits_i {{A_i}}  = \left\{ {{a_1},{a_2},...,{a_s}} \right\}$. For building an instance of $M$-nkMTP with $k=l, n=k+1$, do the following:

\begin{itemize}
\item Build sets $B_i$, each of size $k$, from $s$ new elements $\left\{ {{b_1},{b_2},...,{b_s}} \right\}$, such that a one-to-one correspondence between the elements in $A_i$ and the elements in $B_i$ exists: ${a_j} \in {A_i} \Leftrightarrow {b_j} \in {B_i}$.
\item Add a new element, say $\theta$, which does not appear in the sets $A_i$ or $B_i$, to each of the sets $A_i$ and $B_i$. Denote the new sets by ${{\tilde A}_i}$ and ${{\tilde B}_i}$.
\end{itemize}

The input to $M$-nkMTP with $n=k+1$ will be the sets ${{\tilde A}_i}$ and ${{\tilde B}_i}$, where we will ask whether there exist $2M$ subsets of size $k$ each that are pair-wise disjoint. If $l$-SP provides a solution of size $M$ for the sets $A_i$, then clearly the sets ${A_i} \subseteq {\tilde A_i}, {B_i} \subseteq {\tilde B_i}$ serve as solution of size $2M$ to $M$-nkMTP with $n=k+1$. On the other hand, if there exists a solution of size $2M$ in the $M$-nkMTP problem, we have three cases:
\begin{enumerate}
\item $M$ subsets $A{'_i} \subseteq {{\tilde A}_i}$ and $M$ subsets $B{'_i} \subseteq {{\tilde B}_i}$ appear in the solution. The element $\theta$ can appear in only one of the subsets, since they must be pair-wise disjoint. If $\theta$ belongs to some $A{'_i}$, then we have $M$ subsets $B{'_i}$ that provide a solution to $l$-SP (after transforming the elements in $B{'_i}$ to the their corresponding elements in $A{'_i}$). On the other hand, if $\theta$ belongs to some $B{'_i}$, then the solution is the sets $A{'_i}$. 
\item $M_1$ subsets $A{'_i} \subseteq {{\tilde A}_i}$ and $M_2$ subsets $B{'_i} \subseteq {{\tilde B}_i}$ appear in the solution, where $M_1 < M_2$ and $M_1 + M_2 = 2M$. $\theta$ can appear in at most one of the subsets $B'_i$. In addition, $M < M_2$, and therefore choosing the subsets $B'_i$ that do not contain $\theta$ leads to a solution of $l$-SP with at least $M$ subsets (again, transformation to the elements of $A_i$ is required).
\item $M_1$ subsets $B{'_i} \subseteq {{\tilde B}_i}$ and $M_2$ subsets $A{'_i} \subseteq {{\tilde A}_i}$ appear in the solution, where $M_1 < M_2$ and $M_1 + M_2 = 2M$. A solution of size at least $M$ to $l$-SP is obtained in a similar way to the previous case.
\end{enumerate}

The transformation ${A_i} \to {\tilde A_i},{\tilde B_i}$ is clearly polynomial in $L$, since it merely requires to build $L$ sets of size $k$ and to add one element to each of the resulting $2L$ sets. Thus, the reduction described above is a polynomial time reduction. Therefore, $M$-nkMTP is NP-complete for $k \ge 3, n = k+1$, and it remains to show that $M$-nkMTP is NP-complete for $k \ge 3, n > k+1$.

Consider $M$-nkMTP with $k \ge 3, n = k+2$. We can reduce $M$-nkMTP with $k \ge 3, n = k+1$ (which we proved to be NP-complete) to $M$-nkMTP with $k \ge 3, n = k+2$, similarly to the reduction of $l$-SP to $M$-nkMTP with $k=l, n=k+1$ that was described earlier. Continuing in the same fashion, we are able to reduce $M$-nkMTP with $n=k+j$ ($k \ge 3, j \ge 1$) to $M$-nkMTP with $n=k+j+1$. Finally, we deduce that $M$-nkMTP is NP-complete for $3 \le k \le n$, meaning that nkMTP (the optimization version of $M$-nkMTP) is NP-hard. \qed
\end{proof}

\end{document}